\documentclass[copyright]{eptcs}


\usepackage{breakurl}        

\input{dcfs.tex}


\newcommand{\bean}{\begin{eqnarray*}}
\newcommand{\eean}{\end{eqnarray*}}
\newcommand{\bp}{\begin{proof}}
\newcommand{\ep}{\end{proof}}
\newcommand{\ra}{\rightarrow}
\newcommand{\lf}{\langle}
\newcommand{\rf}{\rangle}
\newcommand{\lc}{\prec\!\!}
\newcommand{\rc}{\!\!\succ}
\newcommand{\Lra}{\Longrightarrow}

\newcommand{\lra}{\longrightarrow}

\def\eps{\varepsilon}
\newcommand{\nat}{\mathbb{N}}
\newcommand{\ANEPFC}{\mathit{ANEPFC}}
\newcommand{\dollar}{\$}

\newcommand{\s}{\sigma}

\newcommand{\y}{\,\wedge\,}

\newtheorem{cor}{Corollary}


\newcommand{\bdisplay}{\begin{description}\footnotesize\item[]}
\newcommand{\edisplay}{\end{description}}

\newcommand{\bquot}[1]{\begin{quotation}\small\noindent
  \textbf{#1}\hspace{\labelsep}\ignorespaces}
\newcommand{\equot}{\unskip\end{quotation}}

\begin{document}%
\title{Small Universal Accepting Networks of Evolutionary Processors with Filtered Connections\,%
\thanks{Remco Loos' work was supported by Research Grant ES-2006-0146 of the
Spanish Ministry of Science and Innovation. Victor Mitrana
acknowledges support from the Alexander von Humboldt Foundation and
the Academy of Finland, project 132727.}
}
\def\titlerunning{Small Universal ANEPFCs}
\def\authorrunning{R.~Loos, F.~Manea, V.~Mitrana}
\author{Remco Loos
\institute{EMBL -- European Bioinformatics Institute\\
Wellcome Trust Genome Campus --
Hinxton -- Cambridge -- CB10 1SD -- UK}
\email{remco.loos@ebi.ac.uk}
\and
Florin Manea
\institute{Faculty of Mathematics and Computer Science -- University of Bucharest\\
Academiei 14 -- Bucharest -- 010014 -- Romania}
\email{flmanea@gmail.com}
\and
Victor Mitrana
\institute{Department of Information Systems and Computation --
Technical University of Valencia\\
Camino de Vera s/n. -- 46022 Valencia -- Spain}
\email{mitrana@fmi.unibuc.ro}
}

\maketitle
\begin{abstract}
  In this paper, we present some results regarding the size complexity of Accepting Networks of Evolutionary Processors with Filtered Connections (ANEPFCs). We show that there are universal ANEPFCs of size $10$, by devising a method for simulating 2-Tag Systems. This result significantly improves the known upper bound for the size of universal ANEPFCs which is $18$.
  We also propose a new, computationally and descriptionally efficient simulation of nondeterministic Turing machines by ANEPFCs. More precisely, we 
  describe (informally, due to space limitations) how ANEPFCs with $16$ nodes can simulate in $O(f(n))$ time any nondeterministic Turing machine of time complexity $f(n)$. Thus the known upper bound for the number of nodes in a network simulating an arbitrary Turing machine is decreased from $26$ to $16$.
\end{abstract}

%
%
\section{Introduction}

The basic structure of an accepting network of evolutionary processors (ANEP for short) is
widely met in distributed, parallel and evolutionary computing: a virtual undirected graph whose
nodes are processors handling some data. All node processors act simultaneously on the local data
in accordance with some predefined rules, and then local data becomes a mobile agent which can
navigate in the network following a given protocol.
All the nodes send simultaneously their data and the receiving nodes handle also simultaneously all the arriving messages. Only the data able to pass a filtering process can be communicated to the other processors. This filtering process may require the data to satisfy some conditions imposed by the sending processor, by the receiving processor or by both of them.

In a series of papers starting with \cite{gnep} (for the generating variants)
and \cite{dna10} (for the accepting variants) this general structure is particularized in a
bio-inspired way: each node may be viewed as a cell having genetic information encoded in DNA sequences which may evolve by very simple evolutionary events, that is point mutations. Each node is specialized in just one of these evolutionary operations. Furthermore, the data in each node is organized in the form of multisets of words, each word appearing in an arbitrarily large number of copies, and all the copies are processed in a massive parallel manner, such that all the possible events that can take place do actually take place.
Furthermore, the filtering process is based on simple ``random-context" conditions, namely the presence/absence of some symbols. Clearly, the biological hints presented above are intended to explain in an informal way how some biological phenomena are {\it sources of inspiration} for the mathematical computing model.

In \cite{dna10} one presents a characterization of the complexity class {\bf NP} based on ANEPs. The work \cite{mscs} discusses how ANEPs can be considered as problem solvers. In \cite{ipl}, one shows that every recursively enumerable language can be accepted by an ANEP with $24$ nodes. Moreover, this construction proves that for every {\bf NP}-language there exists an ANEP of size $24$ deciding that language in polynomial time. While the number of nodes of this ANEP does not depend on the language, the other parameters of the network (rules, symbols, filters) depend on it. This result may be also interpreted as a method for solving every {\bf NP}-problem in polynomial time by ANEPs of constant size.
All the aforementioned results were obtained via simulations of Turing machines by ANEPs.

It is expected that having filters associated with each node, as in the case of ANEPs, allows a strong control of the computation. Indeed, every node
has an associated input and output filter; two nodes can exchange data if it passes the output filter of the sender {\it and} the input filter of the receiver. Moreover, if some data is sent out by a node and not able to enter any other node, then it is lost. In \cite{jucsTOM} the ANEP model considered in \cite{dna10} was simplified by moving the filters from the nodes to the edges. Each edge was viewed as a two-way channel such that the input and output filters, respectively, of the two nodes connected by the edge coincide; now two nodes can exchange data if it passes the filters of the edge existing between the two nodes. Clearly, the possibility of controlling the computation in such networks seemed to be diminished. For instance, there was no possibility to loose data during the communication steps. However, in \cite{jucsTOM}
(a simplified proof can be found in \cite{{ijfcsDM}}) one proves that these newly introduced
devices, called accepting networks of evolutionary processors with filtered connections (ANEPFCs for short), were still computationally complete. Furthermore,
in \cite{ijfcsDM} it is constructed an universal ANEPFC with $18$ nodes, and it is shown that every recursively enumerable language $L$ is recognized by an ANEPFC with $27$ processors, having the property that only $7$ of its nodes depend on the language $L$, while the others remain unchanged regardless the accepted language; moreover, any nondeterministic Turing machine can be simulated efficiently (with respect to time and space complexity) by an ANEPFC with $26$ processors. All these results were also based on
simulations of Turing machines by ANEPFCs.

Here we aim to improve the results reported in \cite{ijfcsDM}. More precisely, we first look for a universal ANEPFC with a smaller number of nodes. To this end, we propose a simulation of 2-tag systems introduced in \cite{minsky}. Second, we are interested in finding a way to design ANEPFCs with less than $26$ nodes that simulate computationally efficient nondeterministic Turing machines.
Note that the universal ANEPFC obtained
from the simulation of a tag system does not solve the second problem as a $2$-tag system can efficiently simulate any deterministic Turing machine but not nondeterministic ones.
Based on a similar idea to that used in the simulation of tag systems, we propose a simulation of nondeterministic Turing machines with ANEPFCs of size $16$ which maintain the working time of the Turing machine. That is, every language accepted by a nondeterministic Turing machine in time  $f(n)$ can be accepted by an ANEPFC of size $16$ in time $O(f(n))$. Consequently, the class ${\bf NP}$ equals the class of languages accepted in polynomial time by ANEPFCs of size $16$. This result considerably improves the known bound of $26$ reported in \cite{ijfcsDM}.

\section{Basic definitions}

We start by summarizing the notions used throughout the paper; for all unexplained notions
the reader is referred to \cite{handbook}. An
{\it alphabet} is a finite and nonempty set of symbols. The
cardinality of a finite set $A$ is written $card(A)$. Any sequence
of symbols from an alphabet $V$ is called {\it word (string)} over
$V$. The set of all words over $V$ is denoted by $V^*$ and the
empty word is denoted by $\eps$. The length of a word $x$ is
denoted by $|x|$ while $alph(x)$ denotes the minimal alphabet $W$ such
that $x\in W^*$. For a word $x\in W^*$, $x^r$ denotes the reversal of the word.

We consider here the following definition of $2$-tag systems that appears in~\cite{rogo}.
It is slightly different but equivalent to those from \cite{post,minsky}. A $2$-tag system 
\hbox{$T=(V,\phi)$} consists of a finite alphabet of symbols~$V$, containing a special {\it halting symbol} $H$ and a finite set of rules $\phi:V\setminus\{H\} \ra V^+$ such that $|\phi(x)|\geq 2$ or $\phi(x)=H$.
Furthermore, $\phi(x)=H$ for just one $x\in V\setminus\{H\}$.
A halting word for the system $T$ is a word that contains the halting symbol $H$ or whose length is less than $2$; the transformation $t_T$ (called the tag operation) is defined on the set of non-halting words as follows: if $x$ is the leftmost symbol of a non-halting word $w$, then $t_T(w)$ is the result of deleting the leftmost $2$ symbols of $w$ and then appending the word $\phi(x)$ at the right end of the obtained word.
A computation by a $2$-tag system as above is a finite sequence of words produced by iterating the transformation $t$, starting with an initially given non-halting word $w$ and halting when a halting word is produced. Note that a computation is not considered to exist unless a halting word is produced in finitely-many iterations. We recall that such restricted $2$-tag systems are universal \cite{rogo}.

A nondeterministic Turing machine is a construct $M=(Q,$ $V,$ $U,$
$\delta ,$ $q_0,$ $B,$ $F)$, where $Q$ is a finite set of states,
$V$ is the input alphabet, $U$ is the tape alphabet, $V \subset
U$, $q_0$ is the initial state, $B \in U \setminus V$ is the
``blank'' symbol, $F \subseteq Q$ is the set of final states, and
$\delta$ is the transition mapping,\linebreak 
$ \delta : (Q\setminus F)\!\times\! U\! \rightarrow\! 2^{Q\times (U\setminus
\{B\})\times\{R,L\}}$. In this paper, we assume without
loss of generality that any Turing machine we consider has a
semi-infinite tape (bounded to the left) and makes no stationary
moves; the computation of such a machine is described in \cite{handbook,hartmanis2,Papa}. An input word is accepted if and only if after a finite number of moves the
Turing machine enters a final state. The language accepted by the
Turing machine is a set of all accepted words. We say a Turing
machine \emph{decides} a language $L$ if it accepts $L$ and
moreover halts on every input.
The reader is referred to \cite{hartmanis2,Papa} for the classical
time and space complexity classes defined for Turing machines.

We say that a rule $a\ra b$, with $a,b\in V\cup\{\eps\}, a\neq b,$  is a {\it substitution rule} if both $a$ and $b$ are not $\eps$; it is a {\it deletion rule} if $a\ne\eps$ and $b=\eps$; it is an {\it insertion rule} if $a=\eps$ and $b\ne\eps$. The set of all substitution, deletion, and insertion rules over an alphabet $V$ are denoted by $Sub_V$, $Del_V$, and $Ins_V$, respectively.

Given a rule as above $\s$ and a word $w\in V^*$, we define the following
\emph{actions} of $\s$ on $w$:
\begin{itemize}
\item If $\s\equiv a\ra b\in Sub_V$, then
$$\s^*(w)=\s^r(w)=\s^l(w)=\left\{
\begin{array}{ll}
\{ubv:\ \exists u,v\in V^*\ (w=uav)\},\\
\{w\},\mbox{ otherwise}.
\end{array}\right.$$

\item If $\s\equiv a\to \eps\in Del_V$, then
$
\s^*(w)=\left\{
\begin{array}{ll}
\{uv:\ \exists u,v\in V^*\ (w=uav)\},\\
\{w\},\mbox{ otherwise},
\end{array}\right.$
$$
\begin{array}{llc}
\s^r(w)=\left\{
\begin{array}{ll}
\{u:\ w=ua\},\\
\{w\},\mbox{ otherwise},
\end{array}\right. & \qquad &
\s^l(w)=\left\{
\begin{array}{ll}
\{v:\ w=av\},\\
\{w\},\mbox{ otherwise}.
\end{array}\right.
\end{array}
$$

\item If $\s\equiv \eps\to a\in Ins_V$, then 
$\s^*(w)=\{uav:\ \exists u,v\in V^*\ (w=uv)\},\ \s^r(w)=\{wa\},\
\s^l(w)=\{aw\}.$
\end{itemize}


\noindent In the following $\alpha\in\{*,l,r\}$ expresses the way of applying a deletion or insertion rule to a word, namely at any position ($\alpha=*$), at the left ($\alpha=l$), or at the right ($\alpha=r$) end of the word, respectively.  For every rule $\s$, action $\alpha\in \{*,l,r\}$, and $L\subseteq V^*$, we define the \emph{$\alpha$-action of $\s$ on $L$} by $\s^\alpha(L)=\bigcup_{w\in L} \s^\alpha(w)$.  Given a finite set of rules $M$, we define the \emph{$\alpha$-action of $M$} on the word $w$ and the language $L$ by
\begin{center}$ M^{\alpha}(w)=\bigcup_{\s\in M}
\s^{\alpha}(w)\ \mbox{ and } \ M^{\alpha}(L)=\bigcup_{w\in
L}M^{\alpha}(w),$\end{center}
respectively.  In what follows, we shall refer to the rewriting operations defined above as {\it evolutionary operations} since they may be viewed as linguistic formulations of local gene mutations.

For two disjoint subsets $P$ and $F$ of an alphabet $V$ and a word $x$ over $V$, we define the predicates
\begin{align*}
 \varphi^{s}(x;P,F)&\equiv P\subseteq alph(x) \y F\cap alph(x)=\emptyset,\\
 \varphi^{w}(x;P,F)&\equiv alph(x)\cap P \ne \emptyset \y F\cap alph(x)=\emptyset.
\end{align*}

The construction of these predicates is based on {\it random-context conditions} defined by the two sets $P$ ({\it permitting contexts/symbols})
and $F$ ({\it forbidding contexts/symbols}). Informally, the former condition requires ($s$ stands for strong) that all permitting symbols are and no forbidding symbol is present in $x$,  while the latter ($w$ stands for weak) is a weaker variant such that at least one permitting symbol appears in $x$ but still no forbidding symbol is present in $x$.

For every language $L\subseteq V^*$, $P$, $F$ as above, and $\beta\in \{s,w\}$, we define:
\begin{center}$\varphi^\beta(L,P,F)=\{x\in L\mid \varphi^\beta(x;P,F)\}.$\end{center}

An \emph{accepting network of evolutionary processors with filtered connections} (abbreviated ANEPFC) is a $9$-tuple
\begin{center}
$\Gamma=(V,U,G,{\cal R},{\cal N},\alpha,\beta,x_I,x_O),$
\end{center}
where:
\begin{itemize}
\item $V$ and $U$ are the \emph{input and network alphabet}, respectively; we have $V\subseteq U$.
\item $G=(X_G,E_G)$ is an undirected graph without loops with the set of nodes
$X_G$ and the set of edges $E_G$. Each edge is given in the form of a
binary set. $G$ is called the \emph{underlying graph} of the network.
\item ${\cal R}:X_G\lra 2^{Sub_U}\cup 2^{Del_U}\cup 2^{Ins_U}$ is a mapping which associates 
with each node \emph{the set of evolutionary rules} that can be applied in that node. 
Note that each node is associated only with one type of evolutionary rules, namely for 
every $x\in X_G$ either ${\cal R}(x)\subset Sub_U$ or ${\cal R}(x)\subset Del_U$
or\linebreak ${\cal R}(x)\subset Ins_U$ holds.
\item ${\cal N}:E_G\lra 2^U\times 2^U$ is a mapping which associates with each
edge \hbox{$e\in E_G$} \emph{the permitting and forbidding filters of that edge}; 
formally, \hbox{${\cal N}(e)=(P_e,F_e)$}, with $P_e\cap F_e=\emptyset$.
\item $\alpha: X_G\lra \{*,l,r\}$; $\alpha(x)$
gives \emph{the action mode of the rules} of node $x$ on the words existing in that
node.
\item $\beta: E_G\lra \{s,w\}$ defines \emph{the filter type of an edge}.
\item $x_I, x_O \in X_G$  are \emph{the input and the output node} of $\Gamma$, respectively.
\end{itemize}
We say that $card(X_G)$ is the size of $\Gamma$. Generally, the ANEPs considered in the literature have complete underlying graphs, namely graphs without loops in which every two nodes are connected. Starting from the observation that every ANEPFC can be immediately transformed into an equivalent ANEPFC with a complete underlying graph (the edges that are to be added are associated with filters which make them useless), for the sake of simplicity, we discuss in what follows ANEPFCs whose underlying graphs have useful edges only. Note that this is not always possible for ANEPs.

A \emph{configuration} of an ANEPFC $\Gamma$ as above is a mapping $C:X_G\lra 2^{U^*}$ which associates a set of words with every node of the graph. A configuration may be understood as the sets of words which are present in any node at a given moment. Given a word $w \in V^*$, \emph{the initial configuration} of $\Gamma$ on $w$ is defined by $C_0^{(w)}(x_I)=\{w\}$ and $C_0^{(w)}(x)=\emptyset$ for all $x\in X_G\setminus\{x_I\}$.

A configuration can change either by an {\it evolutionary step} or by a {\it communication step}. When changing by an evolutionary step, each component $C(x)$ of the configuration $C$ is changed in accordance with the set of evolutionary rules ${\cal R}(x)$ associated with the node $x$ and the way of applying these rules $\alpha(x)$. Formally, we say that the configuration $C'$ is obtained in \emph{one evolutionary step} from the configuration $C$, written as $C\Lra C'$, iff 
$C'(x)=({\cal R}(x))^{\alpha(x)}(C(x))\mbox{ for all } x\in X_G.$

When changing by a communication step, each node processor $x\in X_G$ sends one copy of each word it contains to every node processor $y$ connected to $x$, provided they can pass the filter of the edge between $x$ and $y$. It keeps no copy of these words but receives all the words sent by any node processor $z$ connected with $x$ providing that they can pass the filter of the edge between $x$ and $z$.

Formally, we say that the configuration $C'$ is obtained in \emph{one communication step} from configuration~$C$, written as $C\vdash C'$, iff
\bean
C'(x)&=&(C(x)\setminus(\bigcup_{\{x,y\}\in E_G}\varphi^{\beta(\{x,y\})}(C(x),{\cal N}(\{x,y\}))))\cup (\bigcup_{\{x,y\}\in E_G} \varphi^{\beta(\{x,y\})}(C(y),{\cal N}(\{x,y\})))
\eean
for all $x\in X_G.$

Let $\Gamma$ be an ANEPFC; the computation of $\Gamma$ on the input word $z\in V^*$ is a sequence of configurations $C_0^{(z)},C_1^{(z)},C_2^{(z)},\dots$, where $C_0^{(z)}$ is the initial configuration of $\Gamma$ on $z$, $C_{2i}^{(z)}\Lra C_{2i+1}^{(z)}$ and
$C_{2i+1}^{(z)}\vdash C_{2i+2}^{(z)}$, for all $i\geq 0$. By the previous definitions, each configuration $C_i^{(z)}$ is uniquely determined by the configuration $C_{i-1}^{(z)}$, thus each computation in an ANEPFC can be seen as deterministic.

A computation {\it halts} (and it is said to be {\it finite}) if one of the following two conditions holds:\\
(i) There exists a configuration in which the set of words existing in the output node $x_O$ is non-empty. In this case, the computation
is said to be an {\it accepting computation}.\\
(ii) There exist two identical configurations obtained either in consecutive evolutionary steps or in consecutive communication steps.

The {\it language accepted} by $\Gamma$ is
$$L_a(\Gamma)=\{z\in V^*\mid \mbox{ the computation of $\Gamma$ on $z$}\mbox{ is an accepting one.}\}$$
We say that an ANEPFC $\Gamma$ decides the language $L\subseteq V^*$, and write $L(\Gamma)=L$ iff $L_a(\Gamma)=L$ and the computation of $\Gamma$ on every $z\in V^*$ halts.

In a similar way to Turing machines, we define two computational complexity measures 
using ANEPFC as the computing model. To this aim we consider an ANEPFC $\Gamma$ with the
input alphabet~$V$ that halts on every input. The {\it time complexity} of the finite 
computation $C_0^{(x)}, C_1^{(x)}, C_2^{(x)}, \dots, C_m^{(x)}$ of $\Gamma$ on $x\in V^*$ is denoted by $Time_{\Gamma}(x)$ and equals $m$.
The time complexity of $\Gamma$ is the partial function from $\nat$ to $\nat$:
$Time_{\Gamma}(n)=\mbox{max}\{Time_{\Gamma}(x)\mid x\in
V^*, |x|=n\}.$
We say that $\Gamma$ decides $L$ in time $O(f(n))$ if $Time_{\Gamma}(n)\in O(f(n))$.

For a function $f:{\nat}\lra {\nat}$ we define:
\bean
{\bf Time}_{\ANEPFC_p}(f(n))&=&\{L\mid \mbox{there exists an
ANEPFC $\Gamma$ of size $p$ deciding $L$,}\\ 
&& \quad\mbox{and $n_0$ such that 
$Time_{\Gamma}(n) \le f(n)$ for all $n \geq n_0$}\}.
\eean
Moreover, we write ${\bf PTime}_{\ANEPFC_p}=\displaystyle\bigcup_{k\ge 0} {\bf Time}_{\ANEPFC_p}(n^k)$ for all $p\ge 1$ as well as 
$${\bf PTime}_{\ANEPFC}=\displaystyle\bigcup_{p\ge 1} {\bf PTime}_{\ANEPFC_p}.$$

We recall from \cite{ijfcsDM}:

\begin{theorem}\label{result_ipl}
${\bf NP} = {\bf PTime}_{\ANEPFC_{26}}$.
\end{theorem}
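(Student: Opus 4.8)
The plan is to prove the two inclusions ${\bf NP} \subseteq {\bf PTime}_{\ANEPFC_{26}}$ and ${\bf PTime}_{\ANEPFC_{26}} \subseteq {\bf NP}$ separately, since this is a cited result from \cite{ijfcsDM} the argument is a sketch of the construction there. For the forward inclusion, start from an arbitrary language $L \in {\bf NP}$ decided by a nondeterministic Turing machine $M$ in polynomial time $f(n) = n^k$. The core task is to build an ANEPFC $\Gamma$ of size exactly $26$ that simulates $M$ with only polynomial time overhead, so that $Time_\Gamma(n) \in O(f(n)^c)$ for some constant $c$, which still lands in ${\bf PTime}_{\ANEPFC_{26}}$. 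The encoding would represent an instantaneous description of $M$ (tape contents, head position, current state) as a word over the network alphabet $U$, using auxiliary marker symbols to locate the head and to carry the pending state. One simulated step of $M$ is realized by a fixed-length block of evolutionary and communication steps: a node scans for the head marker, a substitution node rewrites the scanned cell and updates the adjacent state symbol according to some $\delta$-transition (the nondeterministic choice corresponds to which rule fires among the several loaded in that node), insertion/deletion nodes at the left or right end handle tape extension and head movement, and the filtered edges route the word through this cycle while rejecting malformed intermediate strings. The output node $x_O$ receives a word exactly when $M$ reaches a final state.

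The delicate part of this direction — and the reason the bound is $26$ rather than something smaller — is packing all the bookkeeping into a constant number of nodes independent of $M$. Since $M$'s state set, tape alphabet, and transition table all vary with $L$, one cannot afford one node per state or per symbol; instead the construction must encode $Q$, $U$, and $\delta$ into the \emph{rules and filters} of a fixed graph, exploiting that ${\cal R}$ and ${\cal N}$ may depend on $L$ while $card(X_G)$ does not. I would organize the $26$ nodes into functional groups: a small ``input preprocessing'' group that converts the raw input $z \in V^*$ into the initial ID encoding; a ``head-search and rewrite'' group implementing one $\delta$-step; ``left-end'' and ``right-end'' groups of insertion/deletion nodes for simulating head motion and blank allocation; and a ``cleanup/routing'' group plus the output node. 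The main obstacle is verifying that the filters $\varphi^s, \varphi^w$ on the edges are strong enough to prevent ``leakage'' — partially-rewritten or desynchronized words escaping into the wrong part of the network and producing spurious accepts — while still being permissive enough that every legitimate computation of $M$ has a corresponding accepting computation of $\Gamma$; this is where the random-context mechanism is stretched to its limit and where most of the care in \cite{ijfcsDM} goes.

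For the reverse inclusion ${\bf PTime}_{\ANEPFC_{26}} \subseteq {\bf NP}$, the argument is the routine one: an ANEPFC is itself simulable by a nondeterministic (indeed deterministic) Turing machine with at most polynomial overhead per step, because each evolutionary or communication step transforms the current multiset of words in a way computable in time polynomial in the total size of that multiset, and the number of distinct words reachable in $t$ steps is polynomially bounded when $\Gamma$ runs in time $t$. Since $\Gamma$ decides $L$ in time $n^k$, a Turing machine tracking $\Gamma$'s configuration accepts $L$ in time $n^{O(k)}$, so $L \in {\bf P} \subseteq {\bf NP}$. Combining the two inclusions gives ${\bf NP} = {\bf PTime}_{\ANEPFC_{26}}$. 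The write-up would therefore spend almost all its length on the $26$-node construction for the forward direction and dispatch the converse in a sentence or two; the genuinely hard step to get right — and to present convincingly within space limits — is the node-economical encoding of $\delta$ together with the correctness of the edge filters, which is exactly the ingredient the present paper then improves upon to reach $16$ nodes.
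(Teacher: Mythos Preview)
The paper does not prove this theorem at all: it is stated immediately after ``We recall from \cite{ijfcsDM}'' and serves purely as background, with no accompanying argument. So there is no proof in the paper to compare against; what you have written is a reconstruction of the cited result, and your forward inclusion sketch is indeed along the lines of the construction in \cite{ijfcsDM}.

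Your reverse inclusion, however, contains a real error. You claim that an ANEPFC is simulable by a \emph{deterministic} Turing machine in polynomial time because ``the number of distinct words reachable in $t$ steps is polynomially bounded when $\Gamma$ runs in time $t$.'' This is false. A single substitution node with rules applied at arbitrary positions can multiply the number of distinct words in a configuration by a factor linear in the current word length at every evolutionary step, so after polynomially many steps a configuration may contain exponentially many distinct words. Taken at face value, your argument would give ${\bf PTime}_{\ANEPFC_{26}}\subseteq {\bf P}$, and combined with the forward inclusion you would have proved ${\bf P}={\bf NP}$ --- a clear signal that the step has failed.

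The correct argument for ${\bf PTime}_{\ANEPFC_{26}}\subseteq {\bf NP}$ uses nondeterminism in an essential way: a nondeterministic Turing machine guesses a \emph{single} word and its trajectory through the network step by step (which rule fires, at which position, and which edge the word traverses), and verifies that this word reaches $x_O$ within the time bound. Since any individual word has length at most $|z|+t$ after $t$ steps and each local check is polynomial, this guess-and-verify runs in polynomial time and places $L$ in ${\bf NP}$, but not (as far as anyone knows) in ${\bf P}$.
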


\section{Decreasing the size of universal ANEPFCs}

In the following we show how a $2$-tag system can be simulated by an ANEPFC of size $10$.
\begin{theorem}\label{2tag}
For every $2$-tag system $T=(V,\phi)$ there exists a complete ANEPFC $\Gamma$
of size $10$ such that $L(\Gamma)=\{w\mid \mbox{ $T$ halts on $w$}\}$.
\end{theorem}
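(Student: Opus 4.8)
The plan is to encode the configuration of the $2$-tag system $T=(V,\phi)$ as a word circulating in the network, and to implement one application of the tag operation $t_T$ by a fixed-length cycle of evolutionary and communication steps. Concretely, I would represent a tag-system word $w = a_1 a_2 \cdots a_k$ by a network word of the form $Y a_1 a_2 \cdots a_k$ (or a rotated variant), where $Y$ is a marker symbol drawn from a small bounded set of ``control'' symbols; the input node $x_I$ receives the raw word $w\in V^*$ and a few initial steps prepend the start marker. One round of simulation must: read the leftmost symbol $a_1$, delete $a_1$ and $a_2$ from the left, and append $\phi(a_1)$ at the right. Since $\phi$ is a fixed finite table, the ``append $\phi(a_1)$'' part is done by a bank of right-insertion rules indexed by $a_1$; the trick is making the network commit to exactly the block $\phi(a_1)$ corresponding to the symbol that was just read, which is where the filters do the bookkeeping.

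First I would set up the node inventory. We have $10$ nodes, each specialized to one rule type and one action mode ($*,l,r$). Roughly: one input node $x_I$; one output node $x_O$ reachable only when the halting symbol $H$ (or a word of length $<2$, detected via absence of a ``second-symbol'' marker) appears at the read position; a left-deletion node to strip $a_1$ and a second one (or the same one reused via a two-pass marker) to strip $a_2$; a small constellation of substitution nodes that transform the read symbol $a_i$ into a primed/annotated version $a_i'$ so the edge filters can route the word to the correct insertion phase; and right-insertion nodes that spell out $\phi(a_1)$ one letter at a time, again using annotated intermediate symbols so that the word cannot ``leak'' to the wrong node mid-insertion. The edges carry permitting/forbidding filters $\varphi^s$ or $\varphi^w$ that enforce the phase order: a word in phase $j$ contains a phase-$j$ marker and is forbidden from every edge except the one leading to the phase-$(j+1)$ node. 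A standard and essential detail is that whenever a node might leave some word unchanged (because the intended rule does not apply), the forbidding filters must trap that stale word so it cannot re-enter the cycle and corrupt the simulation; I would devote a couple of forbidding symbols purely to this ``garbage collection.'' Since $\phi(x)=H$ for exactly one $x$, and $|\phi(x)|\ge 2$ otherwise, the ``append'' blocks all have length $\ge 2$, which is convenient: the word never shrinks spuriously, and halting is detected purely by seeing $H$ (or by a length-$<2$ test encoded by the marker structure) at the read slot.

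The correctness argument has two directions, both by induction on the number of tag steps. For soundness, I would prove an invariant: after each full simulation round the network contains exactly one word, of the form (marker)$\,\cdot\,t_T^{(i)}(w)$, and all other nodes are empty — so the network's deterministic computation faithfully tracks the unique $T$-computation from $w$. For completeness (reaching $x_O$ iff $T$ halts on $w$): if $T$ halts, the halting word is produced after finitely many rounds, the read slot exhibits $H$ (or the word is too short), the corresponding substitution/insertion route instead leads into $x_O$, so $\Gamma$ accepts; conversely, if $T$ does not halt, the invariant shows the word in $\Gamma$ never carries an $H$ at the read slot and never has length $<2$ there, so no word ever reaches $x_O$ and (since there is always a strictly-changing word in play) $\Gamma$ does not satisfy the stalling halting condition either, hence $\Gamma$ does not accept. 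A small subtlety worth isolating is the very first round and the very last round (prepending the initial marker from a bare $w\in V^*$, and the clean transition into $x_O$), which I would handle with one or two dedicated extra markers rather than trying to fold them into the generic round.

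**The main obstacle** I anticipate is fitting everything into $10$ nodes while keeping the phases rigidly separated. The deletion of the first two symbols, the read of $a_1$, the selection of the right $\phi(a_1)$-block, and the letter-by-letter right-insertion of that block each naturally want their own node(s), and left/right-end actions are not interchangeable with the general $*$ action, so one cannot freely merge nodes. The real design work — and the part most likely to need a clever trick — is reusing the substitution nodes across phases (a single $Sub$ node doing different relabelings at different phases, disambiguated entirely by edge filters and annotated symbols) and arranging the insertion of a variable-length block $\phi(a_1)$ through a constant number of nodes by cycling the word back through the same right-insertion node with a ``remaining suffix of $\phi(a_1)$ still to write'' counter encoded in the marker. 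Getting those filters exactly right, with airtight forbidding sets so no intermediate word escapes to a wrong node, is where the proof will be delicate; the rest is bookkeeping.
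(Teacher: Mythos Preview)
Your high-level plan---simulate one tag step per network cycle, encode ``what remains of $\phi(a_1)$ to be written'' in a bounded marker, and spell out the block letter by letter---is the right shape, and it matches the paper's strategy. But two concrete mechanisms are missing, and one stated invariant is wrong, which together constitute a real gap.

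First, the invariant ``after each full simulation round the network contains exactly one word'' is not achievable in this model. Substitution rules act in mode $*$, so a rule $a\to a'$ fires at \emph{every} occurrence of $a$, producing one successor word per occurrence; you cannot restrict a substitution to position~$1$. The paper does not try to keep a single word alive. Instead, in node~$1$ it nondeterministically rewrites \emph{some} occurrence of $a$ to the compound symbol $[\phi(a)]$ and \emph{some} occurrence of $b$ to $b^{\circ}$, letting all wrong choices proliferate; the check that these were actually the first two positions is deferred to the end of the cycle, where two \emph{left}-deletion nodes (deleting $\lc a_0\rc'$ and then $b^{\circ}$) succeed only on words where those symbols sit at the left end. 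All other copies stall harmlessly. Your plan to ``read'' $a_1$ via a substitution and then route by filters cannot distinguish the leftmost occurrence from any other; you need the guess-then-verify-by-left-deletion trick, and the correctness argument must track a \emph{set} of words, not one.

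Second, the part you flag as the main obstacle---writing the correct next letter of $\phi(a_1)$ with a constant number of nodes---is exactly where the paper's idea is nontrivial, and your sketch does not yet contain it. A right-insertion node cannot choose which $a_j$ to insert based on the marker; filters see only presence/absence of symbols, not ``the last inserted letter matches the head of the remaining suffix.'' The paper's solution is to insert a \emph{fixed} dummy $a_0'$ at the right, and then run a two-node substitution loop that simultaneously \emph{decrements} the leading index inside the bracket (turning $[a_i x]$ stepwise into $\lc a_0 x\rc$) and \emph{increments} the rightmost dummy ($a_0'\to a_1'\to\cdots\to a_i$), with a $\$$-insert/\$-delete pair enforcing lockstep. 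When the bracket head reaches $a_0$, the right end has become $a_i$; the bracket then shifts to the next letter of $x$ and the loop restarts. This coupled counter is the device that lets a bounded node set emit an arbitrary letter of an unbounded alphabet under control of the marker; without it, your ``cycle back through the same insertion node'' idea has no way to select the correct symbol.
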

\begin{proof} Let $V=\{a_1,a_2,\ldots,a_n,a_{n+1}\}$ be the alphabet of the tag system $T$
with $a_{n+1}=H$ and\linebreak \hbox{$V'=V\setminus \{H\}$}.
We consider the ANEPFC 
$$\Gamma=(V',U,K_{10},{\cal R},{\mathcal N},\alpha,\beta,1, 10)$$
with the $10$ nodes labeled with the numbers from $1$ to $10$.
The working alphabet of the network is defined as follows:\\
\centerline{$U=V\cup\{\dollar,\#,a_0',a_0'',\lc a_0\rc'\}\cup\{a',a'',a^{\circ}\mid a\in V'\}\cup
\{[x],\lf x \rf ,\lc x\rc \mid x\in X\}$,}\\
where $X=\{x\in (V\cup\{a_0\})^*\mid |x|\leq \mbox{max}\{|\phi(a)|\mid a\in V'\}\}, $ and $\dollar,\# \notin V $.
The processors placed in the nodes of the network are defined as follows (we assume that the output node $10$ has an empty set of rules):

\begin{itemize}\small
\item \underline{The node $1$}:
 $M = \{ a\ra [\phi(a)], a\ra a^{\circ}\mid a\in V'\}$, $\alpha = *$.

\item \underline{The node $2$}:
 $M = \{\eps\ra a_0'\}$, $\alpha= r$.

\item \underline{The node $3$}:
 $M = \{\eps\ra \dollar\}$, $\alpha= r$.

\item \underline{The node $4$}:
 $M = \{a_{k}'\ra a_{k}''\mid 0\le k\le n\}\cup{}$\\
 \hspace*{\fill}$\{[a_kx]\ra \lf a_{k-1}x\rf, \lc a_kx \rc\ra \lf a_{k-1}x\rf,
\lc a_0a_kx\rc\ra \lf a_{k-1}x\rf \mid x\in X, 1\!\le\! k\!\le\! n+1\}$,\\
 \hspace*{\fill}$\alpha = *$.

\item \underline{The node $5$}:
 $M = \{a_{k-1}''\ra a_{k}'\mid 1\le k\le n\}\cup\{a_{k-1}''\ra a_{k}\mid 1\le k\le n+1\}
\cup \{\lf x\rf\ra \lc x\rc \mid x\in X\}$, $\alpha = *$.

\item \underline{The node $6$}:
 $M = \{\dollar\ra \eps\}$, $\alpha= r$.

\item \underline{The node $7$}:
 $M = \{\lc a_0\rc\ra\lc a_0\rc'\}$, 
 $\alpha =* $.

\item \underline{The node $8$}:
 $M = \{\lc a_0\rc'\ra \eps\}$, $\alpha= l$.

\item \underline{The node $9$}:
 $M = \{a^{\circ}\ra\eps\mid a\in V' \}$, $\alpha =l $.

\end{itemize}

The edges of the network and their filters are defined as follows:

\begin{itemize}\small
\item \underline{The edge $\{1,2\}$} has $\beta = w$ and
$P = \{a^\circ\mid a\in V\},\
F = \{a'\mid a\in V\}\cup \{a_{n+1},a_0'\}.$

\item \underline{The edge $\{2,3\}$} has $\beta = w$ and
$ P = \{[\phi(a)]\mid a\in V'\},\
 F = \{\dollar, \lc a_0\rc\}.$

\item \underline{The edge $\{3,4\}$} has $\beta = w$ and
$ P = \{\dollar\},\
F = \{\lc a_0\rc\}\cup \{\lf x \rf \mid x\in X\}\cup \{a''\mid a\in V\cup\{a_0\}\}.$

\item \underline{The edge $\{4,5\}$} has $\beta = w$ and
$ P = \{a'' \mid a\in V\cup\{a_0\}\},\ 
F = \{[x]\mid x\in X\}\cup \{\lc x\rc \mid x\in X\} .$

\item \underline{The edge $\{5,6\}$} has $\beta = w$ and
$ P = \{\lc x\rc \mid x\in X\},\
F = \{a''\mid a\in V\cup\{a_0\}\}\cup\{[x],\lf x\rf\mid x\in X\}.$

\item \underline{The edge $\{6,2\}$} has $\beta = w$ and
\begin{align*} 
P &= \{\lc a_0x\rc\mid x\in X, x\neq \lambda\},\\
F &= \{\dollar,a_0'\}\cup\{[x],\lf x\rf\mid x\in X\}\cup \{\lc x\rc\mid x\neq a_0y, y\in X\}.
\end{align*}

\item \underline{The edge $\{6,3\}$} has $\beta = w$ and
\begin{align*}
P &= \{\lc x\rc\mid x\neq a_0y, y\in X\},\\
F &= \{\dollar\}\cup\{[x],\lf x\rf\mid x\in X\}\cup \{\lc x\rc\mid x= a_0y, y\in X\}.
\end{align*}

\item \underline{The edge $\{6,7\}$} has $\beta = w$ and
$ P = \{\lc a_0\rc \},\
F = \{\dollar\}\cup\{\lc x\rc\mid x\in X, x\neq a_0\}\cup\{[x],\lf x\rf\mid x\in X\}.$

\item \underline{The edge $\{7,8\}$} has $\beta = w$ and
$ P = \{\lc a_0\rc'\},\
F =\{\lc a_0\rc \}.$

\item \underline{The edge $\{7,3\}$} has $\beta = w$ and
$ P = \{\lc a_0\rc\},\
F =\{\dollar\}.$

\item \underline{The edge $\{8,9\}$} has $\beta = w$ and
$ P = \{a^\circ\mid a\in V\},\
F =\{\lc a_0\rc'\}.$

\item \underline{The edge $\{9,10\}$} has $\beta = w$ and
$ P = \{a_{n+1}\},\
F =U\setminus V.$

\item \underline{The edge $\{9,1\}$} has $\beta = w$ and
$ P =V,\
F =\{a_{n+1}\}\cup (U\setminus V).$
\end{itemize}

We show that $\Gamma$ accepts a word $w$ that does not contain $H$ if an only if $T$ eventually halts on $w$.

Let $w=aby, a,b\in V, y\in V^*$ be a word that does not contain $H$ such that $T$ eventually halts on~$w$.
We show how $w$ can be accepted by $\Gamma$.
At the beginning of the computation $w$ is found in node $1$, where the first symbol $a$ can be replaced by $[\phi(a)]$ but the new string cannot pass the filter of any edge, thus remaining in this node during
the next communication step.
In the next step, we can rewrite $b$ as~$b^{\circ}$, getting the new word $[\phi(a)]b^{\circ}y$ which is sent out to node $2$. Here, the symbol $a_0'$ is inserted to its righthand end obtaining $[\phi(a)]b^{\circ}ya_0'$. This word can only enter node $3$. In this node, the string becomes $[\phi(a)]b^{\circ}ya_0'\dollar $, and goes to node $4$.\\
Let $\phi(a)=a_ix$, for some $1\le i\le n+1$ and $x\in X$. In node $4$, $[a_ix]b^{\circ}ya_0'\dollar $ is first converted into  $[a_{i}x]b^{\circ}ya_0''\dollar $, which remains in this node for the next communication step,
and then into  $\lf a_{i-1}x\rf b^{\circ}ya_0''\dollar $.  This string is sent out to node $5$, where it is transformed into $\lc a_{i-1}x\rc b^{\circ}ya_1'\dollar $, via $\lc a_{i-1}x\rc b^{\circ}ya_0''\dollar $. This string goes to node $6$, where $\dollar $ is deleted, and the string becomes $\lc a_{i-1}x\rc b^{\circ}ya_1'$.
If $i>1$, this string first returns to node $3$, resulting in $\lc a_{i-1}x\rc b^{\circ}ya_1'\dollar $, and then goes to node $4$; in this node, the string is transformed into $\lf a_{i-2}x\rf b^{\circ}ya_1''\dollar $, and sent to node $5$. This process is repeated until a string of the form $\lf a_0x\rf b^{\circ}ya_{i-1}''\dollar $  arrives in node $5$. Here the string becomes $\lc a_0x\rc b^{\circ}ya_{i}\dollar $ and goes to node $6$, where the $\dollar $ symbol is deleted. \\
Now, if $x\neq \eps$ (namely, $x=a_jy$ with $j\geq 1$), the string $\lc a_0x\rc b^{\circ}ya_{i} $ goes to node~$2$, where it becomes $\lc a_0x\rc b^{\circ}ya_{i}a_0' $; then, the string enters node $3$, where it is transformed into $\lc a_0x\rc b^{\circ}ya_{i}a_0'\dollar $. Further, the string enters node $4$, where we obtain the string $\lf a_{j-1}y\rf b^{\circ}ya_{i}a_0''\dollar $, and the process described above is resumed.\\
On the other hand, if $x=\eps$, the string equals $\lc a_0\rc b^{\circ}y\phi(a) $, and goes to node $7$, where it becomes $\lc a_0\rc' b^{\circ}ya_{i} $. From this node, the string can only go to node $8$ where the $\lc a_0\rc'$ symbol is deleted. Then, it enters node $9$ where the $b^\circ$ symbol is deleted, and the string becomes $y\phi(a) $. In this moment, the string can go either to node $1$, provided that it doesn't contain $a_{n+1}$ and the whole procedure we described above is resumed, or to node $10$ and the input string is accepted.

We now argue why the above simulation is the only possible derivation in $T$, so that it 
halts on a word $w$ if and only if $w$ is accepted by $\Gamma$.
In many steps, the derivation stated above is the only possible derivation. However, there are a few cases we need to consider more closely.\\
First of all, in node $1$, only one substitution $a\ra a^{\circ}$ can be performed before the string is sent out, but potentially zero or more than one $a\ra [\phi(a)]$ substitutions. If no such substitution is performed, the resulting string enters node $2$ when it remains forever.
If the outgoing word from node $1$
contains more than one symbol $[x]$, $x\in X$, then after several processing steps, either all of them will be transformed into $\lc a_0\rc$, and the string will reach node $7$, or only some of them will be transformed into $\lc a_0\rc$, and the string will be further blocked in node $6$. On the other hand,
if the current string contains more than one $\lc a_0\rc $ symbol, then it enters node $7$, where exactly one of these symbols is transformed into $\lc a_0\rc'$. Then the string goes to node $3$, where it is blocked.\\
Thus, an accepting computation is only possible if exactly one of each of the symbols $a^{\circ}$ and $[x]$ are present when leaving node $1$. However, both symbols could be on any position of the string. Assume that they do not occupy the first two positions in the way described above. The simulation would then go on as described, 
until a string $y_1\lc a_0\rc' y_2b^{\circ}y_3$ or $y_1b^{\circ}y_2\lc a_0\rc' y_3$,
\hbox{$y_1,y_2,y_3\in (V\setminus\{H\})^*$}  is reached. However, in all of these cases the string will be communicated between nodes $7$, $8$ and $9$ only, thus it will not affect the computation.\\
This covers all possible cases, proving that if $w\in L(\Gamma)$, then $T$ will eventually halt on $w$.
\end{proof}

Since $2$-tag systems are universal \cite{rogo,minsky}, the following corollary is immediate:
\begin{cor} There exists a universal ANEPFC with $10$ nodes.
\end{cor}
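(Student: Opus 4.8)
The plan is to read the corollary off from Theorem~\ref{2tag} together with the universality of restricted $2$-tag systems quoted (from \cite{rogo,minsky}) immediately above it. Concretely, I would fix once and for all a single universal $2$-tag system $T_u=(V_u,\phi_u)$: by \cite{rogo} there is a recursive encoding that sends a pair consisting of a Turing machine $M$ and an input $x$ to a non-halting word $\langle M,x\rangle$ over $V_u\setminus\{H\}$ such that $T_u$ halts on $\langle M,x\rangle$ if and only if $M$ halts on $x$.

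Next I would feed precisely this $T_u$ to the construction of Theorem~\ref{2tag}. Since that construction always produces a complete network on $K_{10}$ --- only the working alphabet $U$, the rule sets $\mathcal R$, the filters $\mathcal N$ and the labellings $\alpha,\beta$ depend on the tag system, never the number of nodes --- the resulting ANEPFC $\Gamma_u$ has exactly $10$ nodes and satisfies $L(\Gamma_u)=\{w\mid T_u\text{ halts on }w\}$, where $w$ ranges over $(V_u\setminus\{H\})^{*}$.

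Finally I would check that $\Gamma_u$ deserves to be called universal. Given any Turing machine $M$, the map $x\mapsto\langle M,x\rangle$ is recursive and lands in the input alphabet $V_u\setminus\{H\}$ of $\Gamma_u$; by the equivalence above, $\langle M,x\rangle$ is accepted by $\Gamma_u$ iff $T_u$ halts on $\langle M,x\rangle$ iff $M$ halts on $x$. Hence the single fixed size-$10$ network $\Gamma_u$ reproduces the halting behaviour of every Turing machine modulo this recursive re-encoding of the input, which is exactly the notion of a universal ANEPFC. I do not anticipate any genuine obstacle; the only point worth a sentence is that the words handed to $T_u$ are over $V_u\setminus\{H\}$, so that Theorem~\ref{2tag} applies to them verbatim --- but this is automatic, since a $2$-tag computation by definition starts from a non-halting word and the encoding of \cite{rogo} may be chosen to avoid the symbol $H$.
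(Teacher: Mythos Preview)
Your proposal is correct and follows exactly the paper's approach: the corollary is derived immediately from Theorem~\ref{2tag} together with the universality of restricted $2$-tag systems from \cite{rogo,minsky}. You have simply spelled out the details that the paper leaves implicit in the word ``immediate''.
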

This result significantly improves the result reported in \cite{mscs} where a universal ANEPFC with $18$ nodes was constructed.

\section{Decreasing the size of ANEPFCs accepting recursively enumerable languages}

Although $2$-tag systems efficiently simulate deterministic Turing machines, via
cyclic tag systems (see, e.\,g., \cite{woods}),
the previous result does not allow us to infer a bound on the size of the networks accepting
in a computationally efficient way all recursively enumerable languages.
We now discuss how an efficient
ANEPFC accepting (deciding) every recursively enumerable (recursive) language can be constructed.

\begin{theorem}\label{TM}
For any recursively enumerable (recursive) language $L$ accepted (decided) by a Turing machine
there exists a complete ANEPFC $\Gamma$ of size $16$ accepting (deciding) $L$.\\
Moreover, if $L\in \mathit{NTIME}(f(n))$, then \hbox{$Time_\Gamma(n)\in {\cal O}(f(n))$}.
\end{theorem}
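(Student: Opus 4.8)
The plan is to adapt the mechanism of the proof of Theorem~\ref{2tag} from tag operations to Turing machine moves. Fix a nondeterministic Turing machine $M=(Q,V,U_M,\delta,q_0,B,F)$ with a semi-infinite tape and no stationary moves, and encode an instantaneous description $x_1\cdots x_{i-1}qx_ix_{i+1}\cdots x_m$ of $M$ as a single word over the network alphabet in which the data ``state $q$ together with the scanned symbol $x_i$'' is carried by one distinguished block symbol, say $\lf q,x_i\rf$, while the rest of the tape is written out normally (with suitable end markers, the left part to the left of the block and the right part to its right); as in Theorem~\ref{2tag} we also include primed, double-primed and bracketed copies of these symbols to control where evolutionary rules take effect. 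Because the head block $\lf q,x_i\rf$ is the unique symbol of its kind in the word, a substitution rule applied with $\alpha=*$ automatically acts ``at the head'', so the head is located for free and never has to be searched for along the tape; this is exactly what keeps each simulated step in $O(1)$ ANEPFC steps, since the filters only test presence/absence of symbols, not positions.

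First I would lay out the $16$ nodes around a single main cycle, in the spirit of nodes $1$--$9$ of Theorem~\ref{2tag}. One distinguished input node prepares, from the input word $w=x_1\cdots x_m\in V^*$, the encoding of the initial configuration $q_0x_1\cdots x_m$ (attaching the end markers and turning the first cell into a head block). A constant number of further nodes form the ``transition'' part of the cycle: one node reads the current transition off the head block and, nondeterministically over all $(p,b,D)\in\delta(q,a)$ — which is realised by having all the corresponding rules present and applied in parallel on the multiset — rewrites $\lf q,a\rf$ into a transient marker recording $b$, the new state $p$ and the direction $D$; the written symbol $b$ is produced at the correct place; and a short relay of nodes ``walks'' the head marker exactly one cell in direction $D$, merging it with the neighbouring cell to form the new head block $\lf p,\cdot\rf$, using synchronized marker-flips and edge filters that are the direct analogue of the $[\,\cdot\,]\!\to\!\lf\cdot\rf\!\to\!\lc\cdot\rc$ machinery of nodes $4$ and $5$ in Theorem~\ref{2tag}. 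Two more nodes handle tape growth: when $D=R$ and the head would step off the written portion a blank $B$ is appended at the right end, while the left boundary is respected since $M$'s tape is bounded left. Finally, a guard node ships the word to the output node as soon as the head block carries a state of $F$; for the decidability statement one adds a small sub-mechanism that detects a rejecting halt (non-final state with no applicable transition) and forces the computation into two identical consecutive configurations, so that $\Gamma$ halts on every input.

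The correctness proof has the same two directions as the proof of Theorem~\ref{2tag}. For soundness of the simulation I would verify that the main cycle transforms the encoding of any configuration $I$ of $M$ into the encoding of some configuration $I'$ with $M$ passing from $I$ to $I'$ in one move, and conversely that every such $M$-successor is reached along some branch of the (multiset-parallel, hence effectively nondeterministic) computation of $\Gamma$; hence a word enters the output node iff $M$ has an accepting computation on $w$, and in the ``decides'' case $\Gamma$ additionally halts on all inputs. The more delicate half is excluding spurious acceptances, and I expect it to be the main obstacle — precisely the analogue of the ``few cases we need to consider more closely'' in Theorem~\ref{2tag}. One must enumerate the wrong nondeterministic choices (a guessed transition whose scanned symbol does not match the head block, the head marker walked onto the wrong neighbour, the transient marker applied to a symbol that is not adjacent to the head) and show that each such branch is trapped forever among a fixed subset of nodes, or runs into two identical consecutive configurations, so that it never delivers a word to the output node; the filters of the relay edges must be chosen so that any such malformed word contains a telltale combination of bracketed/primed symbols that no edge filter admits.

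For the running time, since $Q$, $U_M$ and $\delta$ are fixed, reading a transition, writing the new symbol, extending the tape and walking the head one cell are each performed by a bounded number of nodes, hence in $\Theta(1)$ evolutionary and communication steps, independently of the length of the current word. Thus each move of $M$ costs $\Theta(1)$ steps of $\Gamma$; and since we may assume $f(n)\ge n$ (otherwise $M$ cannot even scan its input), the $O(|w|)$ steps of the initialization are also $O(f(|w|))$. Therefore $Time_\Gamma(n)\in O(f(n))$, which proves the ``moreover'' claim; in particular, as with Theorem~\ref{result_ipl}, one obtains ${\bf NP}={\bf PTime}_{\ANEPFC_{16}}$.
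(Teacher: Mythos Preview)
There is a genuine gap at the head-movement step. In your encoding the block $\lf q,x_i\rf$ sits at an arbitrary interior position of the word, and you then need to ``merge it with the neighbouring cell'' to produce, say, $\lf p,x_{i+1}\rf$. But an ANEPFC offers only single-symbol substitutions (applied at an arbitrary position) and single-symbol insertions/deletions at the left end, the right end, or an arbitrary position; none of these can address ``the symbol adjacent to the unique block''. Any rule that turns a tape letter $c$ into the new head block is free to fire at \emph{every} occurrence of $c$, and when it fires at a non-adjacent occurrence the resulting word still contains exactly one head block over an otherwise ordinary tape string --- hence it is indistinguishable, by any random-context filter, from a correct successor. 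Such a branch cannot be ``trapped'': it carries no telltale combination of primed or bracketed symbols, so it continues to generate further spurious successors and may well reach the output node. This is unlike the bad branches in Theorem~\ref{2tag}, which were detectable because they carried two bracket symbols, or a bracket/circle symbol not at the left end. Note also that the $[\cdot]\to\lf\cdot\rf\to\lc\cdot\rc$ machinery of nodes~$4$ and~$5$ there is a synchronised index count on two \emph{unique} symbols, not a positional walk; the actual positional work in that proof is done by the left-deletion and right-insertion nodes acting on the ends of the word.

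The paper's construction supplies exactly the missing idea: it encodes a configuration in rotated form, writing $v_1[q,a,q',b,X]v_2$ where the tape content is $v_2v_1$, so that the scanned cell is always the leftmost symbol of the whole word (and, for a left move, the cell to be scanned next is the rightmost). Checking that the scanned symbol equals $a$, writing the new symbol $b$, and shifting the head are then all carried out at the ends via left/right deletion and insertion, coupled with the index-decrement mechanism of Theorem~\ref{2tag} to transfer a letter between an end of the word and the block symbol. Your plan would need either this rotated encoding or some other device that keeps the head at an end; without it the soundness direction $w\in L(\Gamma)\Rightarrow M$ accepts $w$ fails.
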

\begin{proof}
Due to space requirements, we present a sketch of the proof only. However, the principles of this construction rely on the same mechanisms as the ones we used in the proof of Theorem \ref{2tag}, so we believe the reader is able to infer a clear idea of how the construction works.

Let $M=(Q,V,W,q_0,B,F,\delta)$ be a nondeterministic Turing machine; we construct the ANEPFC $\Gamma=(V,U,G,{\cal R},{\mathcal N},\alpha,\beta,1,16)$ with $16$ nodes, labeled with the numbers $1$, $2,\ldots,16$, working as follows.

We stress from the beginning that the edges of the graph, and their filters, are defined such that only the following derivation can take place. Also, nodes $3$, $7$, $8$, $10$ and $14$ do not actually contribute to the simulation of a given derivation, but are used to keep away other derivations from occurring.

In our simulation, a string $v_1[q,a,q',b,X]v_2$ corresponds to a configuration of~$M$ where the current state is $q$, the tape content is  $v_2v_1$ and the head of $M$ reads the first symbol of $v_1$.
To obtain a string having this form corresponding to the initial configuration of $M$
from the input string, two nodes of the network are used: the input node $1$ and node $2$, both of them right insertion nodes. Let now $(q',b,X)\in \delta(q,a)$ be the next transition of $M$.

From a word as above, $\Gamma$ simulates the transition of $M$ in a so-called Simulation Phase. As a first step in this simulation, we need to check that the first symbol of $v_1$ is $a$.  We regard $W$, the working alphabet of Turing machine $M$, as the ordered alphabet $\{a_1,a_2,\ldots,a_n\}$ such that $V=\{a_{1},\ldots,a_m\}$ for some $m<n$ and $B=a_n$. Now we can perform this check by simultaneously lowering the index of the first symbol of $v_1$ and that of $a$ in the symbol $[q,a,q',b,X]$ in the same way as we did in Theorem \ref{2tag}. Only when this check is successful, we obtain a string of the form $\dollar v_1'[q,\dollar,q',b,X]v_2$, where $v_1=av_1'$. The leftmost symbol $\dollar$ is then deleted and we move on to the next stage of the simulation. The procedure described above is carried out by eight of the network's nodes, namely the nodes $3$, $4$, $5$, $6$, $7$, $8$, $9$ and $11$. The node $3$ is a right insertion node, nodes $4$, $5$, $6$ and $8$ are substitution nodes, node $7$ is a deletion node and $9$ is a left deletion node. If the end of the tape (i.\,e., the deleted symbol $a$ equals the blank symbol $B$) is reached during the computation and further tape space is needed, then a special insertion node ensures that the blank symbol $B$ is introduced before continuing the simulation. This is done using node $11$, a left insertion node. Also, node $10$ is used during this step of the computation to collect and block strings that should not be processed further.

Now, $\Gamma$ has to write the symbol $b$ and simulate the correct repositioning of the head for the next move. This is done differently depending on whether the head moves left or right.
If $(q',b,R)\in \delta(q,a)$, it suffices to append the symbol $b$ to the right end of the string. This is done by first inserting a symbol $a_0\notin W$ to the right, to obtain $v_1'[q,\dollar,q',b,R]v_2a_0$. Then, the index of this last symbol is increased, while that of $b$ in the symbol $[q,\dollar,q',b,R]$ is simultaneously lowered. After this process finishes, we have a string of the form $v_1'[q,\dollar,q',\#,R]v_2b$. At this point the simulation is complete, and the symbol $[q,\dollar,q',\#,R]$ can be rewritten as $[q',a',q'',b',X]$ for some $(q'',b',X)\in \delta(q',a')$.

If $(q',b,L)\in \delta(q,a)$, then some more work is necessary. First $b$ is written at the left end of the string, just like explained above. This gives a string of the form $bv_1'[q,\dollar,q',\#,L]v_2$.  Moreover, since now the head reads the last symbol of $v_2$, this symbol has to be moved to the left end of the string. Let $v_2=v_2'c$. Now a symbol $a_0'$ is inserted to the left, giving  $a_0'bv_1'[q,\dollar,q',\#,L]v_2'c$. Again, the index of the first symbol is increased, while that of the last symbol is decreased; moreover, the symbol encoding the move of the Turing machine is updated. Finally, we obtain $c'bv_1'[q',c',q'',d,X]v_2'\bot$, for some $(q'',b',X)\in \delta(q',c')$, from which the symbol $\bot$ is deleted.

To perform the procedure described above five of the network's nodes are used: node $12$, a right insertion node, node $13$, a left insertion node, nodes $3$, $4$, $5$, $6$, $7$ mentioned above, and nodes $14$, a substitution node, and $15$, a right deletion node.

From here a new simulation restarts, unless $q'\in F$, in which case the symbol $[q,\dollar,q',\#,X]$ was replaced by a special symbol $\Delta$ and the word can enter the output node $16$, ensuring that $\Gamma$ accepts the word if and only if $M$ accepts it.

Moreover, if $M$ stops on the input string $w$, in $f(|w|)$ steps, then $\Gamma$ stops on the input string $w$ after $f(|w|)$ executions of the Simulation Phase, described above.
\end{proof}

We can easily state the following corollary of the previous Theorem:
\begin{cor}${\bf NP} = {\bf PTime}_{\ANEPFC_{16}}$.
\end{cor}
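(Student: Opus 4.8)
The corollary is a direct consequence of Theorem~\ref{TM} and Theorem~\ref{result_ipl}; the plan is to prove the two inclusions separately, and essentially all the work is already contained in those two theorems.

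For ${\bf NP}\subseteq{\bf PTime}_{\ANEPFC_{16}}$ I would argue as follows. Fix $L\in{\bf NP}$. Since ${\bf NP}=\bigcup_{k\ge 0}\mathit{NTIME}(n^k)$, there is a $k$ with $L\in\mathit{NTIME}(n^k)$, so $L$ is decided by a nondeterministic Turing machine of time complexity $n^k$. Theorem~\ref{TM} then produces a complete ANEPFC $\Gamma$ of size $16$ that decides $L$, and its ``moreover'' clause gives $Time_\Gamma(n)\in{\cal O}(n^k)$. Hence there are constants $c,n_1$ with $Time_\Gamma(n)\le c\,n^k$ for $n\ge n_1$, and therefore $Time_\Gamma(n)\le n^{k+1}$ for all $n\ge\max\{n_1,c\}$. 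By the definition of the classes this means $L\in{\bf Time}_{\ANEPFC_{16}}(n^{k+1})\subseteq{\bf PTime}_{\ANEPFC_{16}}$. The only point that needs a word of care here is the passage from the ${\cal O}$-bound of Theorem~\ref{TM} to the explicit polynomial bound in the definition of ${\bf Time}_{\ANEPFC_{p}}$, which is handled simply by absorbing the multiplicative constant into a one-higher degree.

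For the converse ${\bf PTime}_{\ANEPFC_{16}}\subseteq{\bf NP}$ I would use a padding argument to reduce to Theorem~\ref{result_ipl}. Given an ANEPFC $\Gamma$ of size $16$ witnessing $L\in{\bf PTime}_{\ANEPFC_{16}}$, construct $\Gamma'$ by adjoining ten new nodes, all of whose incident edges carry a weak filter with empty permitting component (so $\varphi^{w}(x;\emptyset,\cdot)$ is false for every $x$, and no word ever traverses such an edge in either direction). Then in every configuration of $\Gamma'$, on every input, these ten nodes hold the empty set, so the restriction of the computation of $\Gamma'$ to the nodes of $\Gamma$ coincides with the computation of $\Gamma$; consequently $\Gamma'$ has size $26$, accepts $L$, halts on exactly the same inputs (the added empty components never change whether two consecutive configurations coincide, so halting condition~(ii) behaves identically), and satisfies $Time_{\Gamma'}(n)=Time_\Gamma(n)$. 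Hence ${\bf PTime}_{\ANEPFC_{16}}\subseteq{\bf PTime}_{\ANEPFC_{26}}={\bf NP}$ by Theorem~\ref{result_ipl}. (More conceptually, one can also see directly that ${\bf PTime}_{\ANEPFC}\subseteq{\bf NP}$: since the computation of an ANEPFC is deterministic and one evolutionary step changes the length of a word by at most one, a nondeterministic Turing machine can, on input $z$ of length $n$, guess a trajectory of length at most $n^k$ of a single word travelling from the input node to the output node, with each intermediate word of length at most $n+n^k$, and check in polynomial time that every step is a legal rule application in the action mode of its node or a legal filtered communication across an edge; $z$ is accepted by $\Gamma$ precisely when such a trajectory exists within the time bound.)

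\textbf{Main obstacle.} Honestly there is no real obstacle: the substance of the result lives in Theorems~\ref{TM} and~\ref{result_ipl}. The two places that require a little attention are the constant-absorption remark above and the verification that the ten padding nodes genuinely leave halting condition~(ii) and the running time untouched; both are immediate once one observes that those nodes are empty in every reachable configuration of $\Gamma'$.
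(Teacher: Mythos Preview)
Your proposal is correct and aligns with the paper's treatment: the paper gives no proof at all, simply declaring the statement ``a corollary of the previous Theorem,'' so your two inclusions via Theorem~\ref{TM} (for ${\bf NP}\subseteq{\bf PTime}_{\ANEPFC_{16}}$) and via padding plus Theorem~\ref{result_ipl} (for the reverse) are exactly the intended argument made explicit. The only thing you add beyond the paper is the careful handling of constants and of the padding nodes, both of which are routine and correctly done.
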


This result provides a significantly improvement of the results in \cite{ijfcsDM}, where the same
characterization of {\bf NP} was obtained for ANEPFCs with $26$ nodes.

\bibliographystyle{eptcs}
\bibliography{loos}

\begin{thebibliography}{10}
\providecommand{\bibitemstart}[1]{\bibitem{#1}}
\providecommand{\bibitemend}{}
\providecommand{\bibliographystart}{}
\providecommand{\bibliographyend}{}
\providecommand{\url}[1]{\texttt{#1}}
\providecommand{\urlprefix}{Available at }
\providecommand{\bibinfo}[2]{#2}
\bibliographystart

\bibitemstart{ijfcsDM}
\bibinfo{author}{C.~Dr\u agoi} \& \bibinfo{author}{F.~Manea}
  (\bibinfo{year}{2008}): \emph{\bibinfo{title}{On the descriptional complexity
  of accepting networks of evolutionary processors with filtered connections}}.
\newblock {\sl \bibinfo{journal}{International Journal of Foundations of
  Computer Science}} \bibinfo{volume}{19}, pp. \bibinfo{pages}{1113--1132}.
\bibitemend

\bibitemstart{jucsTOM}
\bibinfo{author}{C.~Dr\u agoi}, \bibinfo{author}{F.~Manea} \&
  \bibinfo{author}{V.~Mitrana} (\bibinfo{year}{2007}):
  \emph{\bibinfo{title}{Accepting networks of evolutionary processors with
  filtered connections}}.
\newblock {\sl \bibinfo{journal}{Journal of Universal Computer Science}}
  \bibinfo{volume}{13}, pp. \bibinfo{pages}{1598--1614}.
\bibitemend

\bibitemstart{gnep}
\bibinfo{author}{J.~Castellanos}, \bibinfo{author}{C.~Martin-Vide},
  \bibinfo{author}{V.~Mitrana} \& \bibinfo{author}{J.~Sempere}
  (\bibinfo{year}{2003}): \emph{\bibinfo{title}{Networks of evolutionary
  processors}}.
\newblock {\sl \bibinfo{journal}{Acta Informatica}} \bibinfo{volume}{39}, pp.
  \bibinfo{pages}{517--529}.
\bibitemend

\bibitemstart{hartmanis2}
\bibinfo{author}{J.~Hartmanis} \& \bibinfo{author}{R.~E. Stearns}
  (\bibinfo{year}{1965}): \emph{\bibinfo{title}{On the computational complexity
  of algorithms}}.
\newblock {\sl \bibinfo{journal}{Trans. Amer. Math. Soc.}}
  \bibinfo{volume}{117}, pp. \bibinfo{pages}{533--546}.
\bibitemend

\bibitemstart{mscs}
\bibinfo{author}{F.~Manea}, \bibinfo{author}{C.~Martin-Vide} \&
  \bibinfo{author}{V.~Mitrana} (\bibinfo{year}{2007}): \emph{\bibinfo{title}{On
  the size complexity of universal accepting hybrid networks of evolutionary
  processors}}.
\newblock {\sl \bibinfo{journal}{Math. Struct. Comput. Sci.}}
  \bibinfo{volume}{17}, pp. \bibinfo{pages}{753--771}.
\bibitemend

\bibitemstart{ipl}
\bibinfo{author}{F.~Manea} \& \bibinfo{author}{V.~Mitrana}
  (\bibinfo{year}{2007}): \emph{\bibinfo{title}{All NP-problems can be solved
  in Polynomial Time by Accepting Hybrid Networks of Evolutionary Processors of
  constant size}}.
\newblock {\sl \bibinfo{journal}{Inf. Proc. Letters}} \bibinfo{volume}{103},
  pp. \bibinfo{pages}{112--118}.
\bibitemend

\bibitemstart{dna10}
\bibinfo{author}{M.~Margenstern}, \bibinfo{author}{V.~Mitrana} \&
  \bibinfo{author}{M.~Perez-Jimenez} (\bibinfo{year}{2005}):
  \emph{\bibinfo{title}{Accepting hybrid networks of evolutionary systems}}.
\newblock {\sl \bibinfo{series}{LNCS}} \bibinfo{volume}{3384}.
  \bibinfo{publisher}{Springer-Verlag, Berlin}, pp. \bibinfo{pages}{235--246}.
\bibitemend

\bibitemstart{minsky}
\bibinfo{author}{M.~L. Minsky} (\bibinfo{year}{1962}):
  \emph{\bibinfo{title}{Size and structure of universal Turing machines using
  tag systems}}.
\newblock {\sl \bibinfo{journal}{Recursive Function Theory, Symp. in Pure
  Mathematics}} \bibinfo{volume}{5}, pp. \bibinfo{pages}{229--238}.
\bibitemend

\bibitemstart{Papa}
\bibinfo{author}{C.~H. Papadimitriou} (\bibinfo{year}{1994}):
  \emph{\bibinfo{title}{Computational Complexity}}.
\newblock \bibinfo{publisher}{Addison-Wesley}.
\bibitemend

\bibitemstart{post}
\bibinfo{author}{E.~L. Post} (\bibinfo{year}{1943}):
  \emph{\bibinfo{title}{Formal Reductions of the General Combinatorial Decision
  Problem}}.
\newblock {\sl \bibinfo{journal}{Amer. J. Math.}} \bibinfo{volume}{65}, pp.
  \bibinfo{pages}{197--215}.
\bibitemend

\bibitemstart{rogo}
\bibinfo{author}{Y.~Rogozhin} (\bibinfo{year}{1996}):
  \emph{\bibinfo{title}{Small Universal Turing Machines}}.
\newblock {\sl \bibinfo{journal}{Theoret. Comput. Sci.}} \bibinfo{volume}{168},
  pp. \bibinfo{pages}{215--240}.
\bibitemend

\bibitemstart{handbook}
\bibinfo{editor}{G.~Rozenberg} \& \bibinfo{editor}{A.~Salomaa}, editors
  (\bibinfo{year}{1997}): \emph{\bibinfo{title}{{Handbook of Formal Languages},
  \emph{vol. I--III}}}.
\newblock \bibinfo{publisher}{Springer-Verlag, Berlin}.
\bibitemend

\bibitemstart{woods}
\bibinfo{author}{D.~Woods} \& \bibinfo{author}{T.~Neary}
  (\bibinfo{year}{2006}): \emph{\bibinfo{title}{On the tag complexity of
  $2$-tag systems and small universal Turing machines}}.
\newblock In: {\sl \bibinfo{booktitle}{47th Annual IEEE Symposium on
  Foundations of Computer Science FOCS '06}}. pp. \bibinfo{pages}{439--448}.
\bibitemend

\bibliographyend
\end{thebibliography}

\end{document}